\newtheorem{proposition}{Proposition}
\newtheorem{theorem}{Theorem}
\begin{document}

\preprint{APS/123-QED}

\title{Fault Tolerant Quantum Error Mitigation}

\author{Alvin Gonzales$^{1, 3}$}
\email{agonza@siu.edu}
\author{Anjala M Babu$^2$}
\author{Ji Liu$^3$}
\author{Zain Saleem$^3$}
\email{zsaleem@anl.gov}
\author{Mark Byrd$^{2,4}$}
\affiliation{$^1$Intelligence Community Postdoctoral Research Fellowship Program, Argonne National Laboratory, Lemont, IL, USA}
\affiliation{$^2$School of Physics and Applied Physics, Southern Illinois University, Carbondale, IL, USA}
\affiliation{$^3$Mathematics and Computer Science Division, 
Argonne National Laboratory, Lemont, IL, USA}
\affiliation{$^4$School of Computing, Southern Illinois University, Carbondale, IL, USA}


\date{\today}

\begin{abstract}
Typically, fault-tolerant operations and code concatenation are reserved for quantum error correction due to their resource overhead. Here, we show that fault tolerant operations have a large impact on the performance of symmetry based error mitigation techniques. We also demonstrate that similar to results in fault tolerant quantum computing, code concatenation in fault-tolerant quantum error mitigation (FTQEM)  can exponentially suppress the errors to arbitrary levels. For a family of circuits, we provide analytical error thresholds for FTQEM with the repetition code. These circuits include a set of quantum circuits that can generate all of reversible classical computing. The post-selection rate in FTQEM can also be increased by correcting some of the outcomes. Our threshold results can also be viewed from the perspective of quantifying the number of gate operations we can delay checking the stabilizers in a concatenated code before errors overwhelm the encoding. The benefits of FTQEM are demonstrated with numerical simulations and hardware demonstrations.
\end{abstract}

\maketitle
\makeatother

%


%



\section{Introduction}
Quantum error correction (QEC) is necessary to perform arbitrarily long quantum computation \cite{shor_1996FaultTolQuantComp, Aharonov_2008FaultTolQuantCompWithConstErrRate}. Early experiments have demonstrated the ability of codes such as the surface code \cite{AcharyaGoogle_2023SuppQuantErrsByScalingASurfaceCLogicalQubit}, the four-qubit code \cite{Linke_2017FaultTolQuantErrDetect}, and the repetition code \cite{Kelly_2015StPreservByRepErrDetectInASupercondQuantCirc} to improve the preservation of a quantum state. However, QEC typically incurs large resource overheads in terms of the number of qubits and quantum gates. These extra resources often introduce noise that makes the performance worse than that of the original unencoded payload circuit. In lieu of QEC, error mitigation was developed, where the goal is to reduce enough errors to allow the quantum computer to give useful results. Various error mitigation schemes exist such as dynamical decoupling \cite{viola_1998DynamicalSuppOfDecohInTwoStateQuantSys}, zero noise extrapolation \cite{Li_2017EfficientVarQuantSimIncorpActiveErrMinimiz, Temme_2017ErrMitigForShortDepthQuantCirc, Giurgica-Tiron_2020DigitZNEForQEM}, probabilistic error cancelation \cite{Temme_2017ErrMitigForShortDepthQuantCirc, Endo_2018PractQuantErrMitigForNearFutApplications}, symmetry verification \cite{Mcclean_2017HybridQuantClassHierForMitigOfDecohAndDeterOfExcitedStates, Bonet-Monrog_2018LowCostErrMitigBySymmetry, McArdle_2019ErrMitigatedDigitalQuantSim, Cai_2021QuantErrMitigUsingSymmExpansion}, Pauli check sandwiching (PCS) \cite{Debroy_2020ExtendFlagGadgetsForLowOverCircVerif, Gonzales_2023PCS}, virtual distillation \cite{Huggins_2021VirtualDistilForQEM, Koczor_2021ExpErrSuppForNearTermQuantDevices}, and simulated quantum error mitigation \cite{liu_2022classicalSimsAsQuantErrMitigViaCircCut}. Other work combines error mitigation techniques such as zero noise extrapolation with code concatenation \cite{Wahl_2023ZNEOnLogicalQubitsByScalTheECCDist} and probabilistic error cancellation with quantum codes \cite{Piveteau_2021ErrMitigForUnivGatesOnEncodedQubits}. Error mitigation on encoded states has been demonstrated on hardware in \cite{Urbanek_2020ErrDetectOnQuantCompImprovingTheAccuracyOfChemicalCalc}.

There are multiple error mitigation techniques that verify symmetries. The stabilizer group of stabilizer codes \cite{gottesman_1997stabilizerCodes} can be used to validate symmetries and was examined in \cite{gottesman_2016quantumFaultTolInSmallExp} in the context of quantum error correction and demonstrated on hardware in \cite{Vuillot_2018IsErrDetectHelpfulOnIBM5QChips}. Some techniques verify natural symmetries \cite{Mcclean_2017HybridQuantClassHierForMitigOfDecohAndDeterOfExcitedStates, McArdle_2019ErrMitigatedDigitalQuantSim, Bonet-Monrog_2018LowCostErrMitigBySymmetry, Yen_2019ExactAndApproxSymmProjForElectrStrucProbOnAQuantComp, Sagastizabal_2019ExperimentalErrMitigViaSymmVerInAVQE, AruteGoogle_2020HartreeOnASupercondQubitQuantComp, Stanisic_2022ObservingGroundStPropOfTheFermiModelUsingAScalAlgoOnAQuantComp}, while others verify artificial symmetries of the code space \cite{McClean_2020DecodingQuantErrorsWithSubspaceExpansions}. The direct method from these procedures uses the Hadamard test, but by decomposing the symmetry projector into a basis through subspace expansion, we can use destructive measurements at the end of the circuit to simplify the process \cite{McClean_2020DecodingQuantErrorsWithSubspaceExpansions, Cai_2021QuantErrMitigUsingSymmExpansion}. These results were extended in \cite{endo2022_quantumEMForRotationSymmBosonicCodesWithSymmExpan, Tsubouchi_2023VirtualQuantErrDetection}, where they apply the measurements in other parts of the circuit.

In this work, we investigate symmetry-based error mitigation techniques on code spaces.  We demonstrate that fault-tolerant operations provide a dramatic improvement in fidelity compared to non fault-tolerant gates. First, we show that PCS is equivalent to symmetry verification when dealing with encoded states and logical operations. Next, we introduce Fault Tolerant Quantum Error Mitigation (FTQEM) which combines code concatenation and PCS.  In FTQEM we perform quantum error detection only \textit{once} at the end of the circuit. FTQEM shares many similarities with QEC such as the exponential suppression of errors via code concatenation. 

For FTQEM with the repetition code and circuits consisting of only transversal gates and qubit preparations in the $\ket{0}$ state, we derive an analytical error threshold ($p<\frac{1}{et+1}$, where $t$ is the number of transversal operations in the unencoded circuit) below which we can suppress the errors to arbitrary levels. Since the set of transversal operations includes the non-Clifford gate Toffoli, which is universal for reversible classical computing, the family of quantum circuits include a set of quantum circuits that can generate any reversible classical computation. For a given error rate $p$, the FTQEM threshold can be viewed from the perspective of how long we can delay checking the stabilizers before errors overwhelm the code concatenation. Thus, our results also extend the results in \cite{Abu-nada_2017OptTheFreqOfQEC} where they investigated skipping steps in fault tolerant QEC, but without code concatenation.

The repetition code is equivalent to performing one sided PCS-Z checks. Thus, we can perform double sided PCS checks around transversal operations and decode when performing non transversal operations provided that the errors of the sandwiched operations are below the errors introduced by the checks. In the distributed setting, PCS can be used to protect memory and flying qubits. Consequently, our results have implications in the storage and manipulation of entangled states in quantum networks. 

Next, by using the error syndrome and correcting some of the outcomes via post-processing, we are able to alleviate the problem of an exponentially decreasing post-selection rate at the cost of a small decrease in fidelity, typically. After, we analytically show that the direct implementation of the logical Hadamard gate for the repetition code can only increase the error rate with a depolarizing noise model for FTQEM.

Finally, we perform quantum hardware and classical simulations using the repetition code and the Steane code. We demonstrate, as far as we know, for the first time the exponential suppression of errors on quantum hardware for circuits with logical operations via code concatenation with the repetition code. The repetition code is easily implemented without a large overhead because, on most quantum computers, states are initialized to the ground state.

Note that Hadamard + Toffoli is universal for quantum computing \cite{shi2002_BothToffoliAndCXNeedLitHelpToDoUQC}, but the logical Hadamard gate is not transversal for the repetition code and is excluded from the set of operations that we consider in calculating the threshold. This is necessary because the repetition code is not a full code. If we measure the qubits at the end, Z errors at the end are inconsequential  \cite{berg_2022singleshotErrMitigByCoherentPauliChecks}, but during the computation a single physical Z error can transform via the logical Hadamard gate to cause a logical bit flip error \cite{Guilland_2019RepCatQubitsForFTQC}.


Some open problems or difficulties with implementing FTQEM for arbitrary circuits are the difficulty of performing non-transversal gates fault tolerantly and the connectivity of the hardware device. Non-transversal gates are usually probabilistic and thus cause significant noise. Mapping circuits to quantum hardware with limited connectivity poses a problem because swapping qubits drastically degrades the fidelity. However, this is not an issue for fully connected devices.

\section{Background}
\subsection{Pauli Check Sandwiching}
First, consider the Pauli group on $n$ qubits $\mathcal{P}_n=\{I, X, Y, Z\}^{\otimes n}\times \{\pm 1, \pm i\}$. Pauli Check Sandwiching is a technique that verifies symmetries of an ideal unitary payload quantum circuit $U$ \cite{Debroy_2020ExtendFlagGadgetsForLowOverCircVerif, Gonzales_2023PCS}. Typically, it uses the Pauli group to find elements such that
\begin{align}\label{eq:PCS_cond}
    &R_iUL_i=U,
\end{align}
where $L_i,R_i\in \mathcal{P}_n$. Then, we use a pair $L_i$ and $R_i$ to construct left- and right-controlled checks that sandwich the circuit and post select on the zero ancilla measurement outcome. This scheme easily extends to multiple checks by controlling each pair of checks on a different ancilla. We typically refer to each pair of checks as a layer of the scheme. 
    
\subsection{Stabilizer Formalism and Error Correction}
In quantum error correction, we use additional qubits and define a subspace of the overall Hilbert space as the code space $\mathcal{C}$. Codes are typically defined as $[n,k,d]$, which means that we encode $k$ qubits into $n$ qubits and the code has distance $d$ \cite{gottesman_1997stabilizerCodes}. A code that can correct $t$ errors must have a $d\geq 2t+1$ \cite{gottesman_1997stabilizerCodes}.

A quantum state $\rho$ is said to be stabilized by $O_i\in \mathcal{P}_n$ if it satisfies
\begin{align}\label{eq:stabilizerOp}
    O_i\rho O_i^\dagger = \rho.
\end{align} 
The set $\mathcal{S}=\{O_i\}$ satisfying Eq. \eqref{eq:stabilizerOp} is called the stabilizer of $\rho$ and $S$ is an Abelian subgroup of $\mathcal{P}_n$ \cite{gottesman_1997stabilizerCodes, nielsen2011quantumCompAndQuantInfo}.

In stabilizer quantum error correcting codes, the stabilizer $\mathcal{S}$ of the code space is the subgroup of $\mathcal{P}_n$ that stabilizes all states in the code space.

This implies that the basis for the logical states are the simultaneous $+1$ eigenstates of $\mathcal{S}$. Therefore, by measuring the stabilizer group you can detect errors and for sufficiently sized codes correct errors. In code spaces, we refer to post-selection on the stabilizer measurements where no errors are detected as (stabilizer) quantum error detection. Symmetry verification on code spaces is equivalent to quantum error detection. 

\subsection{Logical Operations}
To make the computation more resilient to errors, we perform logical operations directly on the encoded states. For unencoded states, two universal set of gates are Clifford $+$ T and Toffoli + H, that is, \{H, S, CX, T\} and \{CCX, H\}, respectively. One way of defining the logical operations is to copy the transformations of the physical gates on the computational basis states with the logical gates on the logical basis states.

For an $[n,k,d]$ code, the physical implementations of these logical gates can vary because the constraints are only on the code space, which is a subspace of the entire $2^n$ Hilbert space.

A general and possibly more efficient method to determine a logical operation is to use the encoding map given by the unitary $V$. Suppose $U$ is a physical operation that we want to transform into a logical operation on states in the code space $\mathcal{C}$. From the definition of $V$, we have $VU\rho U^\dagger V^\dagger=(VUV^\dagger) V\rho V^\dagger (VU^\dagger V^\dagger)$, where $VU\rho U^\dagger V^\dagger$ is an element of the code space and $\rho$ is the initial state. Since $\rho_l=V\rho V^\dagger\in \mathcal{C}$, $VUV^\dagger$ is the logical operation for $U$.


\subsection{Fault Tolerance}
For encoding to be beneficial, logical operations should not spread errors to too many qubits when a component fails. A logical operation is fault-tolerant if the failure of one of its components results in at most an error on only one physical qubit per logical qubit \cite{preskill_1997faulttolerantQC}. Transversal gates are an important class of logical gates that are trivially fault tolerant. They are applied bitwise so that a failure in a component results in an error in at most one physical qubit per logical qubit \cite{nielsen2011quantumCompAndQuantInfo}. Unfortunately, the Eastin-Knill theorem states that no quantum code can implement a universal gate set transversely \cite{Eastin_2009RestOnTransversalEncodedQuantGateSets}.

For the two-qubit repetition code, the straightforward implementations of the CX and H gates are given in Figs. \ref{fig:logicalCX} and \ref{fig:logicalH}, respectively. The logical CX is transversal and fault-tolerant, but the logical Hadamard is not. For instance, a failure in a CX gate in the logical H can result in errors on both physical qubits. 
    

\begin{figure}
    \centering
    \subfloat[\label{fig:logicalCX}]{\includegraphics[width=0.15\textwidth]{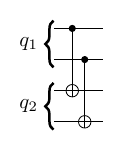}}
    \subfloat[\label{fig:logicalH}]{\includegraphics[width=0.25\textwidth]{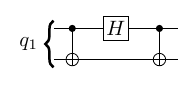}}
    \caption{\ref{fig:logicalCX} is a logical CX for the repetition code. This is transversal and fault tolerant. \ref{fig:logicalH} is a logical H for the repetition code. This is not fault tolerant.}
\end{figure}



\subsection{Avoiding the Hadamard Test}
Typically, measuring the stabilizer group fault-tolerantly is costly because it requires performing a Hadamard test. Direct measurement (DM) \cite{gottesman_2016quantumFaultTolInSmallExp} and decoded stabilizer measurements (DSM) \cite{cory_1998ExpQuantErrCorr} eliminate the need to perform the standard stabilizer (SS) measurements. In DM, the logical states are measured in the computational basis and are decoded by seeing which logical basis state the output string belongs to. If it does not belong to a logical basis state, the result is discarded. In DM, the measurements are transversal and thus DM is a fault-tolerant operation. However, for more complicated codes, this does not verify the phase of the code word.

In DSM, the idea is to decode the logical state and then post-select on the ancillas. The advantage of this method is that it can recover the unmeasured data qubit, and unlike in DM it implicitly post selects based on all the stabilizer generators of the code. From the stabilizer formalism, typically the decoding map transforms all elements of the stabilizer group into the form $I\otimes Z_i \otimes I$, where $Z_i$ acts on the qubit $i$. 
\section{Results}

\subsection{PCS is Equivalent to Stabilizer Quantum Error Detection}
\begin{figure}
    \centering
    \subfloat[\label{fig:1layerPCS}]{\includegraphics[width=0.45\textwidth]{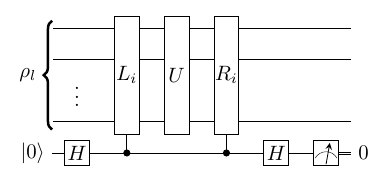}}
    
    \subfloat[\label{fig:stabilErrDetect}]{\includegraphics[width=0.45\textwidth]{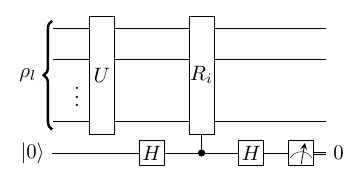}}
   \caption{\ref{fig:1layerPCS} is PCS and \ref{fig:stabilErrDetect} is stabilizer error detection. The two schemes are equivalent provided that we are operating on code spaces and $L_i$ and $R_i$ are elements of the stabilizer group, all satisfying Eq.~\eqref{eq:PCS_cond}. The left check in PCS comes for free because $\rho$ is an element of the code space.}
   \label{fig:pcs_stabilizer_equiv}
\end{figure}

\begin{figure}
    \centering
    \includegraphics[width=0.45\textwidth]{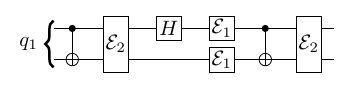}
    \caption{Noisy non fault tolerant Hadamard gate for the repetition code.}
    \label{fig:errorOnNonFTH}
\end{figure}
The first result we show is that PCS is equivalent to stabilizer quantum error detection as depicted in Fig.~\ref{fig:pcs_stabilizer_equiv}. Formally,
\begin{proposition}\label{proposition:PCSEquivToStab}
    Let $\rho$ be an element of the code space $\mathcal{C}$. Then PCS with elements of the stabilizer group $S$ is equivalent to performing stabilizer quantum error detection.
\end{proposition}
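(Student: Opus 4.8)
Proof proposal.

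The plan is to propagate the quantum state through the PCS circuit of Fig.~\ref{fig:1layerPCS} and show that, once $\rho\in\mathcal{C}$ and the checks are drawn from the stabilizer group, the sandwich collapses to exactly the projective post-selection that defines stabilizer quantum error detection in Fig.~\ref{fig:stabilErrDetect}. Recall that a single PCS layer with a check pair $(L_i,R_i)$ satisfying $L_i,R_i\in\mathcal{S}$ and Eq.~\eqref{eq:PCS_cond} prepares an ancilla in $|{+}\rangle$, applies the controlled gate $\Lambda(L_i)=|0\rangle\langle 0|\otimes I+|1\rangle\langle 1|\otimes L_i$, then the (possibly noisy) payload $U$, then $\Lambda(R_i)$, and finally measures the ancilla in the $X$ basis, keeping only the $|{+}\rangle$ outcome. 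I would work at the level of density operators so that noise can be inserted as an arbitrary channel on the data register.

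First I would dispose of the left check. Since $\rho$ is a code state and $L_i\in\mathcal{S}$ we have $L_i P_{\mathcal{C}}=P_{\mathcal{C}}$ and hence $L_i\rho=\rho L_i=\rho$, so a one-line computation gives $\Lambda(L_i)\bigl(|{+}\rangle\langle{+}|\otimes\rho\bigr)\Lambda(L_i)^\dagger=|{+}\rangle\langle{+}|\otimes\rho$; the left controlled-check is the identity on the input, which is precisely the statement that ``the left check comes for free.'' Because this holds layer by layer, any fault occurring between a left check and $U$ is equivalent to a fault acting immediately after $U$, so it is enough to treat the circuit noise as a channel sitting between $U$ and the right checks.

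Next I would evaluate the right check. Writing $\rho'$ for the (generally mixed) data state just before $\Lambda(R_i)$ and using $R_i^\dagger=R_i$ and $R_i^2=I$ for the Hermitian Pauli $R_i$, I would expand $\Lambda(R_i)\bigl(|{+}\rangle\langle{+}|\otimes\rho'\bigr)\Lambda(R_i)^\dagger$ and then project the ancilla onto $|{+}\rangle$; the unnormalized post-selected data state is $\tfrac{I+R_i}{2}\,\rho'\,\tfrac{I+R_i}{2}=\Pi_i\rho'\Pi_i$, where $\Pi_i=\tfrac{I+R_i}{2}$ projects onto the $+1$ eigenspace of $R_i$, and the acceptance probability is $\operatorname{tr}(\Pi_i\rho')$. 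This is literally the state update and acceptance probability of measuring the stabilizer generator $R_i$ and post-selecting the ``no error detected'' ($+1$) outcome. For a multi-layer check the ancillas are independent and $\mathcal{S}$ is Abelian, so the $R_i$, and therefore the $\Pi_i$, mutually commute; the combined post-selection is then the single projector $\prod_i\Pi_i$ onto the common $+1$ eigenspace of the $R_i$, that is, stabilizer error detection with those generators. Conversely, taking the $L_i$ to be the stabilizer generators and $R_i=U L_i U^\dagger$ (a Pauli in $\mathcal{S}$ whenever $U$ is built from logical Clifford operations, which normalize $\mathcal{S}$) satisfies Eq.~\eqref{eq:PCS_cond} and reproduces an arbitrary stabilizer error-detection round, giving the equivalence in both directions.

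The step I expect to be the main obstacle is pinning down the precise meaning of ``equivalent'' in the presence of noise: the controlled checks $\Lambda(L_i)$ and $\Lambda(R_i)$ are physical two-qubit gates that can themselves fail, exactly as the syndrome-extraction circuitry of stabilizer error detection can fail. I would resolve this by comparing the two procedures under a common idealization---noiseless ancillas, controlled checks, and syndrome measurements, with all the circuit noise carried by the payload $U$---so that they stand on equal footing; the computation above then shows they implement the identical data-register operation, namely projection by $\prod_i\Pi_i$ followed by renormalization, with identical acceptance probability. A secondary point that needs care is the ordering of the layers and of any noise interleaved between successive right checks, which is harmless precisely because all the $\Pi_i$ commute.
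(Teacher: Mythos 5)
Your proposal is correct and follows essentially the same route as the paper: the key step in both is that the left controlled-check acts as the identity on $\ket{+}\otimes\rho$ because $L_i\in\mathcal{S}$ stabilizes every code state, after which the remaining right check with ancilla post-selection is exactly the stabilizer error-detection measurement. Your explicit computation that post-selecting the ancilla implements the projector $\tfrac{I+R_i}{2}$ (and the multi-layer commuting-projector remark) simply spells out what the paper leaves to the comparison of Figs.~\ref{fig:1layerPCS} and \ref{fig:stabilErrDetect}.
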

\begin{proof}
    Let $U$ be an encoded operation for the quantum error correcting code. Let $L_i$ and $R_i$ be elements of the stabilizer such that $R_iUL_i=U$. The left and right parity checks are
    \begin{align}
        \tilde L_i=\op{0}\otimes\mathbb{I}+\op{1}\otimes L_i
    \end{align}
    and
    \begin{align}
        \tilde R_i=\op{0}\otimes\mathbb{I}+\op{1}\otimes R_i,
    \end{align}
    respectively, where the control is the ancilla and the targets are the data qubits. The full circuit is shown in Fig.~\ref{fig:1layerPCS}.
    

    Since $L_i$ is an element of $\mathcal{S}$, we have
    \begin{align}
        \tilde L_i(\op{+}\otimes \rho)\tilde L_i=\op{+}\otimes \rho, \qquad \forall \rho\in\mathcal{C},
    \end{align}
    where the first system is the ancilla. The left check comes for free because it is equivalent to implementing identity due to $L_i$ stabilizing all the $\rho$ that are in the code space. Thus, PCS is equivalent to stabilizer quantum error detection when used with code spaces and logical operations. 
\end{proof}

Based on previous results in PCS \cite{Gonzales_2023PCS, liu_2022classicalSimsAsQuantErrMitigViaCircCut}, this motivates the idea to perform the stabilizer quantum error detection only once at the end of the circuit.

\subsection{FTQEM Protocol}
FTQEM is essentially equivalent to stabilizer quantum error detection, but we only perform post-selection once at the end of the circuit. Thus, it is almost equivalent to the scheme described in \cite{gottesman_2016quantumFaultTolInSmallExp}, except that we incorporate code concatenation to reduce the errors to arbitrary levels. Thus, the protocol consists of three steps: (1) encode, (2) perform stabilizer quantum error detection once at the end, and (3) concatenate to reduce the noise to arbitrary levels.

\subsection{Concatenation and Bounds on the Performance of FTQEM Repetition Code}
Unlike in quantum error correction, FTQEM allows errors to accumulate. Thus, even when encoding for FTQEM is beneficial, it is not obvious that concatenation will improve things. Here, we prove a theoretical bound on the performance of FTQEM for the repetition code.  




We bound the logical error rates for a family of encoded circuits which consist of only transversal gates. We state the result here and provide the proof in Appendix \ref{appendix:proofThm}. This result shares similarities with the threshold theorem in fault-tolerant quantum error correction \cite{Aharonov_2008FaultTolQuantCompWithConstErrRate, preskill_1997faulttolerantQC}. The main difference is that, since we only perform error detection once at the end of the circuit in FTQEM, the FTQEM threshold is a function of the number of gates. We restrict the operations to $\mathcal{T}=$\{P$_0$, X, S, T, CX, CCX\}, where P$_0$ is the qubit ground state preparation and CCX is Toffoli.


\begin{theorem}\label{thm:thresholdFTQEM}
Let $\mathcal{C}$ be a $[d,k,d]$ repetition code. Let the operations be elements of $\mathcal{T}$.  For an unencoded payload circuit consisting of $t$ total gates, FTQEM can be used to reduce the logical error rate to arbitrary levels, provided that the probability of a physical failure is $p<\dfrac{1}{et+1}$.
\end{theorem}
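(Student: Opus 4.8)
\emph{Proof plan.} The plan is to set up a recursion over concatenation levels for the logical error rate conditioned on surviving post-selection, and to show it is driven to $0$ under the stated bound on $p$. Write $\varepsilon_\ell$ for this conditional rate after $\ell$ levels, with $\varepsilon_0=p$. First I would pin down which fault patterns are fatal: because the encoded circuit ends in a $Z$-basis measurement, only $X$- and $Y$-type errors matter, and because the repetition code is decoded level by level with post-selection on unanimity inside every block, a level-$\ell$ block reports a flipped value and survives if and only if all $d$ of its level-$(\ell-1)$ sub-blocks do; any partial disagreement is a detected error and the run is discarded. A union bound over the $k$ logical qubits then reduces the theorem to bounding this single-block quantity.

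Next I would count the level-$\ell$ components that can independently drive a sub-block wrong: the $d$ sub-blocks themselves, plus the fault locations inside the level-$\ell$ gadgets implementing the unencoded gates. A transversal gate becomes a fault-tolerant gadget (Fig.~\ref{fig:logicalCX}) in which a single failure produces at most a weight-one error per block, so it contributes one relevant location per rail; the logical Hadamard (Fig.~\ref{fig:logicalH}) is \emph{not} fault tolerant --- it is the three-gate decode/$H$/re-encode circuit and a single internal failure can spread inside the gadget --- so I would charge each Hadamard gadget pessimistically by a factor of three. This leaves an effective budget of at most $c=t+3h$ gadget-locations, each carrying level-$(\ell-1)$ failure probability $\varepsilon_{\ell-1}$, in addition to the $d$ sub-blocks. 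A union bound over which $d$ of these components fail, combined with $\binom{n}{m}\le(en/m)^{m}$, yields a recursion of the form $\varepsilon_\ell\le(ec+1)\,\varepsilon_{\ell-1}^{d}$.

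Iterating from $\varepsilon_0=p$ gives $\varepsilon_L\le\tfrac{1}{ec+1}\bigl((ec+1)\,p\bigr)^{d^{L}}$, which tends to $0$ as $L\to\infty$ exactly when $p<\tfrac{1}{ec+1}$, which is the claim. It remains to confirm that conditioning on passing post-selection does not change the conclusion: the unconditioned ``wrong-and-surviving'' probability decays doubly exponentially in $L$, and one checks it outpaces the decay of the survival probability, so the conditional rate vanishes as well.

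The main obstacle is the honest accounting of error propagation underlying the counting step: because a single component failure --- above all one inside the non-fault-tolerant Hadamard gadget --- can corrupt more than one physical qubit and can correlate sub-block errors, I must show that the reduction to ``$d$ independent weight-one components'' genuinely over-counts every way a full-weight logical error can be assembled from physical faults. It is exactly this analysis that fixes the multiplier $3$ in $c=t+3h$ and the precise shape of the threshold; a secondary point is to absorb the residual correlations from shared ancillas and gadget internals into the constant slack already present in $ec+1$.
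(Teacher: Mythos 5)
There is a genuine gap, and it sits precisely at the point you flag as ``the main obstacle'': your counting treats the logical Hadamard as the three-gate decode/$H$/re-encode circuit of Fig.~\ref{fig:logicalH} and tries to repair its lack of fault tolerance by charging each Hadamard a factor of three. That cannot work. A single internal failure in that gadget (e.g.\ a fault in one of its CX gates) can flip \emph{both} physical qubits of the block, producing a full-weight logical error that the end-of-circuit stabilizer post-selection cannot detect; this contributes a term \emph{linear} in the physical error rate (resp.\ in $\varepsilon_{\ell-1}$) to the block failure probability, so your claimed recursion $\varepsilon_\ell\le(ec+1)\,\varepsilon_{\ell-1}^{d}$ is false and no threshold follows from it. Indeed the paper proves exactly this obstruction in the ``Repetition Code Hadamard'' subsection: with the non-fault-tolerant logical $H$, the encoded logical error rate exceeds the unencoded rate $p/2$ for every $p\in(0,1)$, so no amount of over-counting multipliers rescues that gadget.

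The missing idea in the paper's proof (Appendix~\ref{appendix:proofThm}) is to replace that gadget by a fault-tolerant Hadamard built from one-qubit teleportation (Figs.~\ref{fig:HftTelepInit} and \ref{fig:Hft}): the Hadamard ancilla state is prepared offline and assumed to fail only with probability $O(p^{d})$, while the online part of the gadget is transversal. The constant $c=t+3h$ then counts physical fault locations per rail --- one per transversal gate plus three for the online operations of each Hadamard gadget --- and the analysis splits the logical error into two separately bounded contributions: at least $d$ failures among the $cd$ transversal locations, or at least one of the $h$ offline ancillas failing (at rate $p^{d}$). Dividing by the post-selection probability $p_s\ge(1-p)^{cd}(1-p^{d})^{h}$, applying $\binom{n}{k}\le(en/k)^{k}$, and letting $d\to\infty$ gives the condition $ecp/(1-p)<1$, i.e.\ $p<1/(ec+1)$, with a short monotonicity argument handling the ancilla term. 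Your level-by-level recursion framework (the paper instead increases $d$ directly, noting this is equivalent to concatenation for the repetition code) and your treatment of the post-selection conditioning are fine in spirit, but without the teleportation-based fault-tolerant Hadamard and the separate accounting of its offline ancilla, the central counting step --- and hence the specific threshold $1/(ec+1)$ with $c=t+3h$ --- does not go through.
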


Since Toffoli is universal for reversible classical computation, Theorem \ref{thm:thresholdFTQEM} includes a set of quantum circuits that can generate any reversible classical computation. Theorem \ref{thm:thresholdFTQEM} can also be viewed from the perspective of how long we can delay checking the stabilizers of the code before too many errors accumulate.

\begin{figure}
    \centering
    \includegraphics[width=0.45\textwidth]{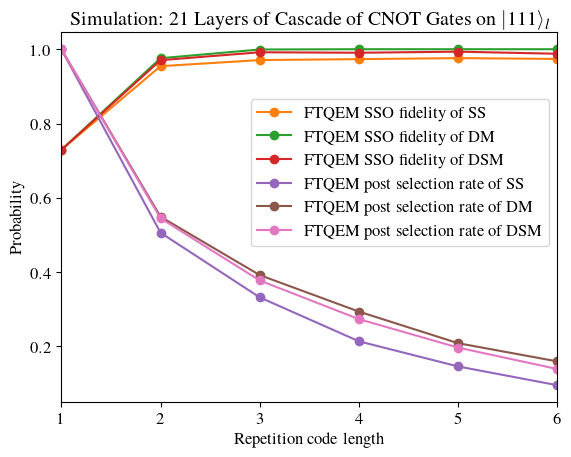}
    \caption{Simulation of a cascade of 21 logical CNOTs on $\ket{111}_l$ logical initial state. The encoding is the repetition code. Noisy two-qubit cat states are used as ancillas for the SS measurements. The single-qubit error rate is 0.1\% and the two-qubit error rate is 1\%. Repetition code length of one means that there is no encoding. The exponential suppression of errors by code concatenation is demonstrated. SS measurements perform the worst in the context of FTQEM.}
    \label{fig:sim_concat_cnot}
\end{figure}

\subsection{Increase Post Selection Rate by Correcting}\label{subsec:correct}
If we implement the current described FTQEM directly, we will get good results, but the post-selection rate decreases exponentially with noise \cite{Gonzales_2023PCS, berg_2022singleshotErrMitigByCoherentPauliChecks}. To address this issue, we can correct some of the outputs with the general cost being a decrease in the fidelity. Let $d$ be the distance of a code. A code can correct errors on $t$ qubits, where $t\leq (d-1)//2$ and $//$ is the floor division. However, we also want to post select away roughly half (in this case 68\%) of the possible incorrect results. 

The possible output strings form a Bell curve distribution with respect to the Hamming weight. For the repetition code, we throw away results that lie within one standard deviation of the mean Hamming weight. This allows the technique to be more scalable. Future research can look at trying other values of the Hamming weight. As the noise levels in quantum computers improve, we can keep more of the results and eventually transition to full quantum error correction. 

\newpage
\onecolumngrid
\begin{widetext}
\begin{center}
\begin{figure}
        \centering
        \subfloat[\label{fig:sim_hgate}]{\includegraphics[width=0.5\textwidth]{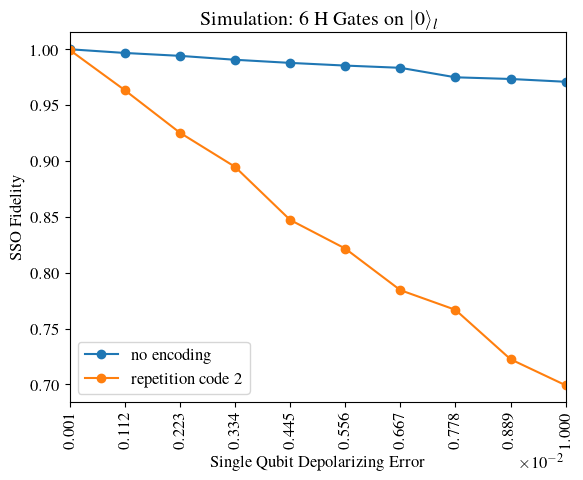}}
        \subfloat[\label{fig:sim_ssdag}]{\includegraphics[width=0.5\textwidth]{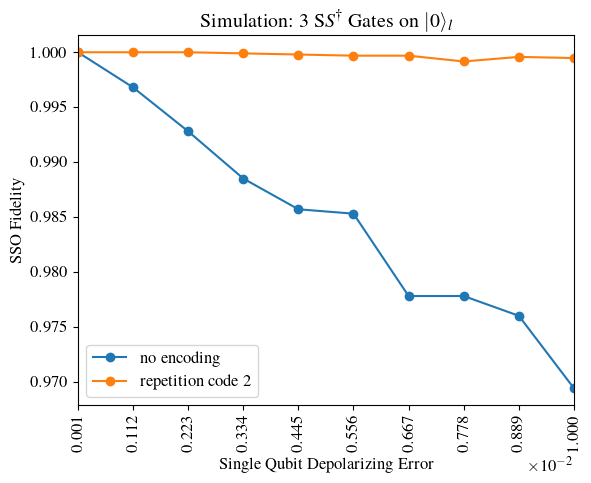}}
       \caption{\ref{fig:sim_hgate} is a simulation consisting of only non fault tolerant H gates repeated six times. \ref{fig:sim_ssdag} is a simulation consisting of a sequence of three SS$^\dagger$ gates. The S gate is transversal and fault tolerant. Encoding improves the performance of the SS$^\dagger$ circuit whereas the performance of the H gate only circuit degrades.}
\end{figure}
\end{center}
\end{widetext}
\twocolumngrid

\subsection{Repetition Code Hadamard}
The straight forward implementation of the logical Hadamard gate in Fig.~\ref{fig:logicalH} is not fault tolerant. Here, we show that the lack of fault tolerance is important in quantum error mitigation with an explicit example. We calculate the logical error rate from depolarizing maps analytically (see Fig.~\ref{fig:errorOnNonFTH}). Let $\mathcal{E}_1$ be the single qubit depolarizing channel, $\mathcal{E}_2$ be the two-qubit depolarizing channel, and $p$ be the depolarizing rate. The logical error rate for an initial starting state $\ket{+}_L$ followed by a noisy logical Hadamard gate is 
\begin{align}
    \dfrac{p(2+(-2+p)p)}{2+2(-1+p)p}.
\end{align}
The error rate for $\ket{+}$ followed by a noisy Hadamard gate is $p/2$. Thus,
\begin{align}
    &\dfrac{p(2+(-2+p)p)}{2+2(-1+p)p}\leq p/2\\
    &\notag\rightarrow\\
    &p\leq 0 \text{ or } 1\leq p
\end{align}
and there is no benefit for using the bit flip code when performing a Hadamard gate. Starting with the basis state $\ket{-}$ shows the same result.


\begin{figure}
    \centering
    \includegraphics[width=0.5\textwidth]{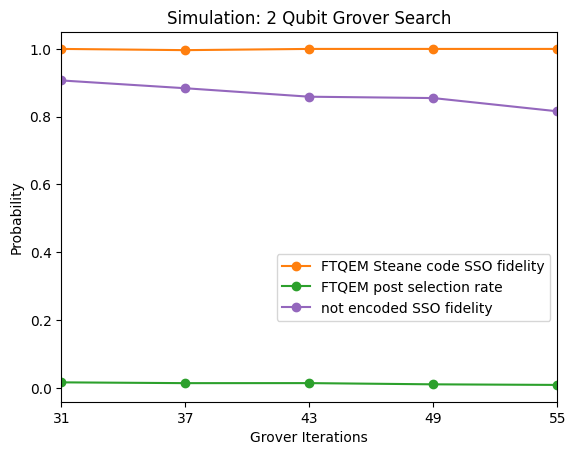}
    \caption{Simulation of two-qubit Grover search with the FTQEM Steane code over varying levels of Grover iterations.}
    \label{fig:SteaneCode}
\end{figure}

\subsection{Simulations}

We use the SSO fidelity $(\sum_i \sqrt{a_ib_i})^2$, where $a_i$ and $b_i$ are the probabilities for outcome $i$ and belong in distributions $a$ and $b$, respectively \cite{Chiaverini_2005SSO}. Unless stated otherwise, we use the repetition code in FTQEM. We choose the repetition code because it is a very light weight code; it typically does not require encoding operations for $\ket{0}_l$, and we can fully verify the symmetry without the Hadamard test by using DM. The repetition code is not a full code because a Z error commutes with its stabilizer group, but we assume that the operations belong in $\mathcal{T}$ and thus Z errors have no effect.

\onecolumngrid
\begin{widetext}    
\begin{center}
\begin{figure}[h!]
        \centering
        \subfloat[\label{fig:hdw_00_cnot11}]{\includegraphics[width=0.5\textwidth]{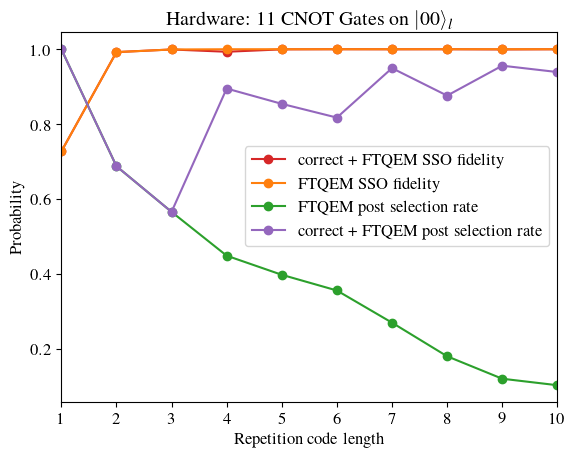}}
        \subfloat[\label{fig:hdw_11_cnot11}]{\includegraphics[width=0.5\textwidth]{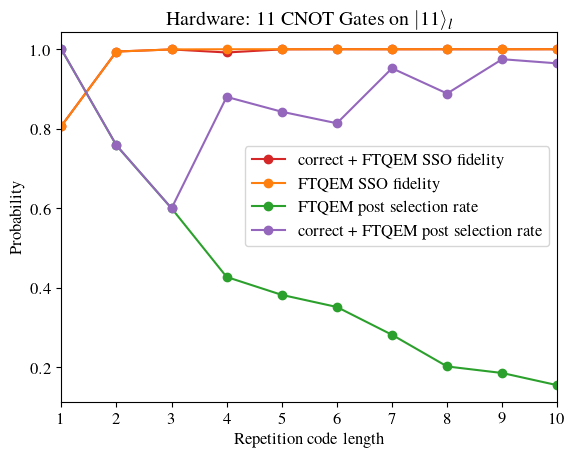}}

        \subfloat[\label{fig:hdw_00_cnot35}]{\includegraphics[width=0.5\textwidth]{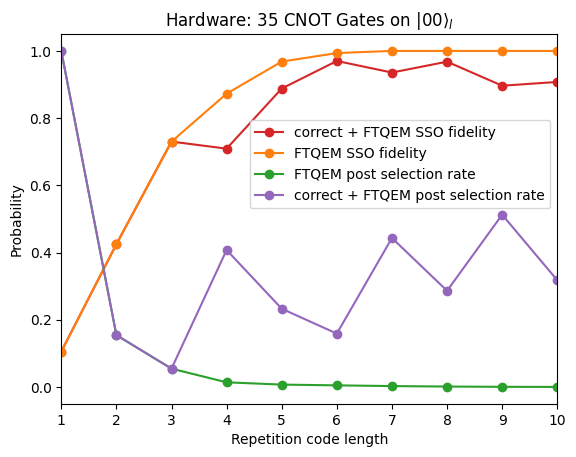}}
        \subfloat[\label{fig:hdw_11_cnot35}]{\includegraphics[width=0.5\textwidth]{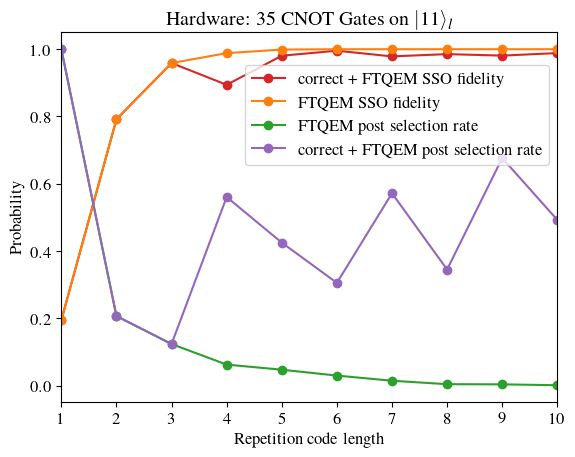}}
       \caption{Hardware demonstrations on ibm\_hanoi. In \ref{fig:hdw_00_cnot11} the circuit starts with the logical $\ket{00}_l$ and executes $11$ CNOT gates. In \ref{fig:hdw_11_cnot11} the circuit starts with the logical $\ket{11}_l$ and executes the $11$ CNOT gates. In \ref{fig:hdw_00_cnot35} the circuit starts with the logical $\ket{00}_l$ and executes the $35$ CNOT gates. In \ref{fig:hdw_11_cnot35} the circuit starts with the logical $\ket{11}_l$ and executes $35$ CNOT gates. A repetition code of length one means no encoding. The executions with ``correct" means that we implemented the correction method discussed in Section \ref{subsec:correct} alongside FTQEM.}
\end{figure}
\end{center}
\end{widetext}
\twocolumngrid

We demonstrate FTQEM with the repetition code under various depolarizing noise and circuits. First, we compare the performance of FTQEM using DM, DSM, or SS measurements. We perform this test using the repetition code and 21 layers of a cascade of logical CNOT gates over the logical $\ket{111}_l$. By varying the repetition length, we also test different levels of concatenation. We use 0.1\% single-qubit depolarizing noise for single-qubit gates and 1\% two-qubit depolarizing noise for two-qubit gates. Repetition code of length one means no encoding. The results are shown in Fig.~\ref{fig:sim_concat_cnot}.

Next, we show the performance of the nonfault-tolerant Hadamard gate and S gate as shown in Figs. \ref{fig:sim_hgate} and \ref{fig:sim_ssdag}, respectively. This corroborates our calculation that the nonfault-tolerant logical Hadamard cannot improve the logical error rate. On the contrary, for the transversal S gate we get an improvement.

Lastly, we demonstrate that FTQEM also works with the Steane code as shown in Fig.~\ref{fig:SteaneCode} for a two-qubit Grover search, which consists of only transversal gates. The error rates used are $0.00003\%$ single-qubit error and $0.002\%$ two-qubit error, which matches rates in H2 Quantinuum devices. These simulations assume full connectivity between qubits.

\subsection{Hardware Demonstrations}
We demonstrate FTQEM on hardware. These results demonstrate exponential suppression of errors through code concatenation on ibm\_hanoi. As mentioned previously, different values of $d$ are equivalent to concatenation of the code, for example, $d=4$ is the concatenation of two $d=2$ encodings. Figs. \ref{fig:hdw_00_cnot11}, \ref{fig:hdw_11_cnot11}, \ref{fig:hdw_00_cnot35}, and \ref{fig:hdw_11_cnot35} demonstrate the performance at various levels of repeated CNOT gates. We also tested the correction method discussed in Section \ref{subsec:correct}. Repetition code of length one means no encoding.

We tested $\ket{00}_l$ and $\ket{11}_l$ initial states with circuits consisting of $11$ and $35$ CNOT gates. We interleaved the two logical qubits to avoid SWAP gates. Note that ibm\_hanoi has an average CNOT error of about $1\%$. Using the threshold formula, $p=0.01<1/(et+1)$, we get $t\leq 36$. Thus, we use 35 CNOTs to make the circuit non-identity. Clearly, we can achieve exponential suppression of errors using FTQEM.

\section{Conclusions}
The performance of FTQEM demonstrates that fault tolerant operations are important in quantum error mitigation. When error detection is performed with non fault-tolerant gates, such as with the non fault-tolerant Hadamard gate in the repetition code, the encoding can make the performance of symmetry-based techniques worse than the unencoded circuit. We show analytically, based on a depolarizing noise model, that the non fault-tolerant Hadamard cannot outperform its unencoded self. In contrast, for a family of fault-tolerant circuits, we demonstrated that concatenation with FTQEM can exponentially decrease the noise to arbitrary levels provided that we are below the FTQEM threshold.

Some interesting open problems are determining the FTQEM threshold for full quantum codes and what the threshold is when the device connectivity requires SWAP gates to be used. For the repetition code, Toffoli and Hadamard may be performed in a error bias preserving manner \cite{Guilland_2019RepCatQubitsForFTQC} via single qubit teleportation \cite{Zhou_2000MethodForQuantLogicGateConst} and the phase basis, but unfortunately the Toffoli implementation requires quantum error correction. It may be possible to replace the error correction with mitigation. Since FTQEM is based on quantum error correction, FTQEM can be used to slowly transition to fault tolerant quantum computing as quantum hardware improves.

\section{Data Availability}
The data presented in this paper is available online at
\url{https://github.com/alvinquantum/FTQEM}.

\section{Code Availability}
The code used for numerical and hardware experiments in this work are available online at \url{https://github.com/alvinquantum/FTQEM}.

\section{Acknowledgements}
This research was supported in part by an appointment to the Intelligence Community Postdoctoral Research Fellowship Program at Argonne National Laboratory, administered by Oak Ridge Institute for Science and Education through an inter-agency agreement between the US Department of Energy and the Office of the Director of National Intelligence. This material is based upon work supported by the U.S. Department of Energy, Office Science, Advanced Scientific Computing Research (ASCR) program under contract number DE-AC02-06CH11357 as part of the InterQnet quantum networking project.  Z.H.S. was supported by the Q-NEXT Center. This research used resources of the Oak Ridge Leadership Computing Facility, which is a DOE Office of Science User Facility supported under Contract DE-AC05-00OR22725.

\appendix

\section{Proof of Theorem \ref{thm:thresholdFTQEM}}\label{appendix:proofThm}

\begin{proof}
We restrict the operations to elements of $\mathcal{T}=$\{P$_0$, X, S, T, CX, CCX\}. Let $a$ denote the unencoded payload circuit, $t$ be the number of gates acting on $a$, and $p$ be the probability of failure of a physical component. Since the gates are transversal, to obtain a logical error, at least $d$ gates must fail out of $td$. Thus, the logical error rate is upper bounded by
\begin{align}
    p_{l}\leq&\sum_{j=d}^{j=td}\binom{td}{j}p^j(1-p)^{td-j}.
\end{align}
We must divide by the post-selection probability to make a valid comparison. The post selection probability is at least the probability of no error
\begin{align}
    p_{s}\geq (1-p)^{td}.
\end{align}
Setting the ratio to less than an arbitrarily small value $\epsilon$, we have
\begin{align}
    &\dfrac{p_{l}}{p_{s}}<\epsilon\\
    \notag&\rightarrow\\
    &\sum_{j=d}^{j=td}\binom{td}{j}p^j(1-p)^{-j}<\epsilon
\end{align}
Using the upper bound on the binomial coefficient $\binom{n}{k}\leq\left(\dfrac{en}{k}\right)^k$ (\cite{odlyzok_1996AsympEnumMethods} equation 4.9), we have
\begin{align}
    \label{eq:summedErr}
    &\sum_{j=d}^{j=td}\binom{td}{j}p^j(1-p)^{-j}\leq\sum_{j=d}^{j=td}\left(\dfrac{etd}{j}\right)^jp^j(1-p)^{-j}<\epsilon.\\
    &\notag\rightarrow\\
    &\label{eq:summedErrRate2}\sum_{j=d}^{j=td}\left(\dfrac{d}{j}\right)^j\left(\dfrac{etp}{1-p}\right)^j<\epsilon
\end{align}

Due to the range of values for $j$, the coefficient $\left(\frac{d}{j}\right)^j\leq 1$ for all $j$. Then, notice that if $\dfrac{etp}{1-p}<1$, Eq.~\eqref{eq:summedErr} can be arbitrarily decreased through $d$. This happens for $p<1/(et+1)$. 

\end{proof}

\bibliographystyle{unsrt}

\vfill

\small

\framebox{\parbox{\linewidth}{
The submitted manuscript has been created by UChicago Argonne, LLC, Operator of 
Argonne National Laboratory (``Argonne''). Argonne, a U.S.\ Department of 
Energy Office of Science laboratory, is operated under Contract No.\ 
DE-AC02-06CH11357. 
The U.S.\ Government retains for itself, and others acting on its behalf, a 
paid-up nonexclusive, irrevocable worldwide license in said article to 
reproduce, prepare derivative works, distribute copies to the public, and 
perform publicly and display publicly, by or on behalf of the Government.  The 
Department of Energy will provide public access to these results of federally 
sponsored research in accordance with the DOE Public Access Plan. 
http://energy.gov/downloads/doe-public-access-plan.}}

\end{document}